\newtheorem{theorem}{Theorem}
\crefname{section}{§}{§§}
\def\BibTeX{{\rm B\kern-.05em{\sc i\kern-.025em b}\kern-.08em
    T\kern-.1667em\lower.7ex\hbox{E}\kern-.125emX}}
\begin{document}

\title{WISE: Lightweight Intelligent Swarm Attestation Scheme for IoT (The Verifier's Perspective)
}

\author{\IEEEauthorblockN{Mahmoud Ammar}
\IEEEauthorblockA{imec-DistriNet, KU Leuven \\
mahmoud.ammar@cs.kuleuven.be}
\and
\IEEEauthorblockN{Mahdi Washha}
\IEEEauthorblockA{IRIT, Toulouse University \\
mahdi.washha@irit.fr}
\and
\IEEEauthorblockN{Bruno Crispo}
\IEEEauthorblockA{imec-DistriNet, KU Leuven \\
	Trento University, Italy \\
	bruno.crispo@unitn.it}

}

\maketitle

\begin{abstract}
The growing pervasiveness of Internet of Things (IoT) expands the attack surface by connecting more and more  attractive attack targets, i.e. embedded devices, to the Internet. One key component in securing these devices is software integrity checking, which typically attained with \textit{Remote Attestation} (RA). RA is realized as an interactive protocol, whereby a trusted party, \textit{verifier}, verifies the software integrity of a potentially compromised remote device, \textit{prover}. In the vast majority of IoT applications, smart devices operate in swarms, thus triggering the need for efficient swarm attestation schemes.

In this paper, we present WISE, the first intelligent swarm attestation protocol that aims to minimize the communication overhead while preserving an adequate level of security. 
WISE depends on a resource-efficient smart broadcast authentication scheme where devices are organized in fine-grained multi-clusters, and whenever needed, the most likely compromised devices are attested. The candidate devices are selected intelligently  taking into account the attestation history and the diverse characteristics (and constraints) of each device in the swarm. 
We show that WISE is very suitable for resource-constrained embedded devices, highly efficient and scalable in heterogenous IoT networks, and offers an adjustable level of security.      
\end{abstract}

\begin{IEEEkeywords}
Internet of Things (IoT), security, remote attestation, swarm, scalability, online learning, malware detection.
\end{IEEEkeywords}


\vspace{-20pt}\section{Introduction}
\label{sec:intro}

Nowadays, Internet-connected embedded devices are increasingly deployed in many aspects of modern life. 
Unfortunately, such devices represent attractive targets for cyber attacks as they are used in a wide range of applications where they process privacy-sensitive information and perform safety-critical tasks. Furthermore, the specialized nature of these devices in terms of limited resources and computing power hinders the realization of strong security mechanisms to prevent cyber attacks. Hence, attacking such devices would  result in very severe consequences, exemplified by Stuxnet \cite{stuxnet}.

Remote attestation (RA) is a well-known security service that enables the detection of malware-infected devices. It allows a trusted entity, denoted as a \textsf{Verifier}, to verify the internal state of remote untrusted IoT device, denoted as a \textsf{Prover}, by measuring its software integrity. Traditional attestation protocols \cite{h2010bootstrapping,spab,smart} consider only a single prover device. However, the diverse applications of IoT are naturally distributed and often embedded in numerous heterogeneous computing devices deployed over wide geographical areas, forming self-organizing mesh networks or swarms. Therefore, applying traditional attestation protocols in such scenarios is inefficient and entails a significant overhead, as the verifier has to run the attestation protocol with each device individually.

\textbf{Problem Statement}. Recently, a number of research proposals have been introduced to address swarm attestation \cite{seda,sana,darpa,lisa,seed,scapi,salad}. To this end, all of these  attestation schemes are efficient and reduce the communication overhead compared to the naive approach of applying many times single prover attestation. Nevertheless, they are sill rigid and not smart enough to adapt the different requirements of the various IoT devices, as whenever attestation takes place, all devices in the swarm are involved without any distinction. 
However, in real scenarios, connected devices in a mesh network have different software configurations and possess heterogenous hardware capabilities. Therefore,  the \textit{equality} approach followed by existing attestation schemes  still incurs substantial overheads.  First, it is harmful to engage safety-critical and time-sensitive  devices in periodic and potentially expensive (in terms of time and energy) RA, especially when the attestation history shows that they have never been compromised or rarely so. Second, efficiency as a feature of scalable attestation is maintained through many factors, and one of them is the aggregation of attestation reports to reduce the message size. Larger the network, the bigger is the size of the aggregate. 
Many IoT devices  (such as IETF Class-1 \cite{RFC7228}) have very limited memories that are not sufficient enough to authenticate and store a big-sized aggregate before propagating it till reaching the verifier. Third, the vast majority of available attestation schemes assume time synchronization which is difficult to handle in heterogenous environments and incurs an extra overhead. 
Last, the nature of some devices (due to their locations, time-sensitivity, tasks, being previously compromised,  etc.) requires to be more frequently attested than others and vice versa. 


\textbf{Contribution.} 
We present WISE, a lightweight, intelligent, and scalable attestation protocol that is very suitable for static heterogenous IoT networks, aiming for minimizing the communication overhead and thus power consumption. In WISE,  the attestation of the entire  swarm is distributed over a  large time window, where a subset of devices is attested  every time interval. The time window can be variable for each device or subset of devices as it is adjusted dynamically at the verifier side.
WISE is smart in the sense that the verifier outputs a candidate subset of devices to be attested after performing online learning from the history of previous attestation periods depending on Hidden Markov Model (HMM) \cite{hmm}, and taking into account the various individual needs and characteristics of connected devices in the swarm as conditions (or features). 

\textbf{Paper outline.} The remainder of this paper is organized as follows. \cref{sec:related} reviews the related work. A system model is presented in~\cref{sec:model}. \cref{sec:wise} describes WISE in details, whereas  \cref{sec:wisef} highlights on the analytical model of the intelligence part of WISE. The security analysis of WISE is discussed in \cref{sec:sec}. Implementation details and evaluations are reported in \cref{sec:impeval}.  \cref{sec:conclusion} concludes the study.

\section{Related Work}
\label{sec:related}

\textbf{Single Device Attestation.} Prior work in single-prover RA can be divided into three main categories: hardware-based techniques~\cite{h2010bootstrapping} that rely on a built-in secure hardware (e.g. Trusted Platform Module (TPM)) in the prover device, software-based techniques~\cite{spab} that do not require any hardware and  depend only on some strict assumptions (e.g. accurate time measurements), and hybrid approaches~\cite{smart} that build atop minimal hardware modifications to the existing low-end IoT platforms \cite{min2014}. Abstractly speaking, each of these categories has pros and cons. Nevertheless, all of them target a single-device attestation. Accordingly, they are not efficient for swarm attestation. 

\textbf{Swarm Attestation.} The recently proposed attestation schemes \cite{seda, sana, darpa,lisa,seed,scapi,salad,slimiot} have addressed the attestation  problem at a large scale where a significant number of devices have to be attested efficiently and securely. These proposals differ from each other in various ways:  (i) the attestation protocol implemented (e.g. interactive \cite{seda, sana, darpa,lisa,scapi,salad,slimiot} or non-interactive \cite{seed}), (ii) the methodology followed in attesting the entire swarm (e.g. tree-based approaches \cite{seda,sana,lisa,seed,scapi,darpa,slimiot} or distributed ones \cite{salad}), (iii) the adversary model considered (e.g. remote-only attacker \cite{seda,sana,lisa,seed}, or remote and physical attackers \cite{darpa,scapi,salad,slimiot}), (iv) the type of the mesh network (e.g. highly dynamic one \cite{salad}, or fairly dynamic and static topology \cite{seda,sana,lisa,seed,scapi,darpa,slimiot}), (v) the consideration of the mesh network (e.g. neighbour discovery, heterogeneity, broadcasting, unicasting, etc.), and (vi) the cryptography employed (symmetric key \cite{lisa,scapi,slimiot}, public key \cite{sana,salad}, or both \cite{seda,darpa}). 

To the best of our knowledge, all existing swarm attestation protocols attest all devices in the swarm whenever the attestation routine is conducted. Hence, we propose WISE, a smart and flexible swarm attestation protocol.

\section{System Model}
\label{sec:model}

\textbf{Overview.} We consider a mesh network with various heterogenous low-end embedded devices. The network should be static or quasi-static~\footnote{Devices only move within the same covering communication range as when they are static.} which we argue that this is the case of the majority of IoT application domains (e.g. smart buildings, etc.). The verifier should know in advance the exact network topology. 
Henceforth, we refer to the verifier as $\upsilon$ and the prover as $\rho$.

\textbf{Device Requirements.} We assume that each device $D_i$ in a swarm $S$ satisfies the minimum hardware properties required for secure RA~\cite{min2014} which are ROM and memory protection unit (MPU). The ROM is needed to store the attestation routine and cryptographic keys, whereas the MPU is required to enforce access control over secret data. 
Also, we assume that $\upsilon$ is a powerful device, i.e. Raspberry PI class or higher. 

\textbf{Adversary Model.} We consider a software-only adversary, $adv$, who has the ability to perform remote  attacks (Dolev-Yao model \cite{dolev}). Thus, $adv$ has full access to the network and can either perform passive (e.g. eavesdrop on communication, etc.) or active (e.g. inject a malware, etc.) attacks.  In line with the majority of existing swarm attestation techniques, We rule out all kinds of DoS and  physical attacks.

\section{WISE: Protocol Description}
\label{sec:wise}

\textbf{Overview.} WISE is a smart scalable attestation protocol which consists of two different phases. The first phase (\cref{subsec:init}) takes place once before the deployment where all devices are initialized by a trusted party (in our case, $\upsilon$) with some public and private data, and distributed logically over different clusters. In the second phase (\cref{subsec:attest}), whenever needed, $\upsilon$ performs the collective attestation over a subset of devices (instead of the entire swarm)  that are wisely selected depending on the attestation history of every device and taking into account the individual characteristics of these devices. 
\begin{figure}[b]
	\vspace{-10pt}
	\centering
	\includegraphics[width=.5\columnwidth]{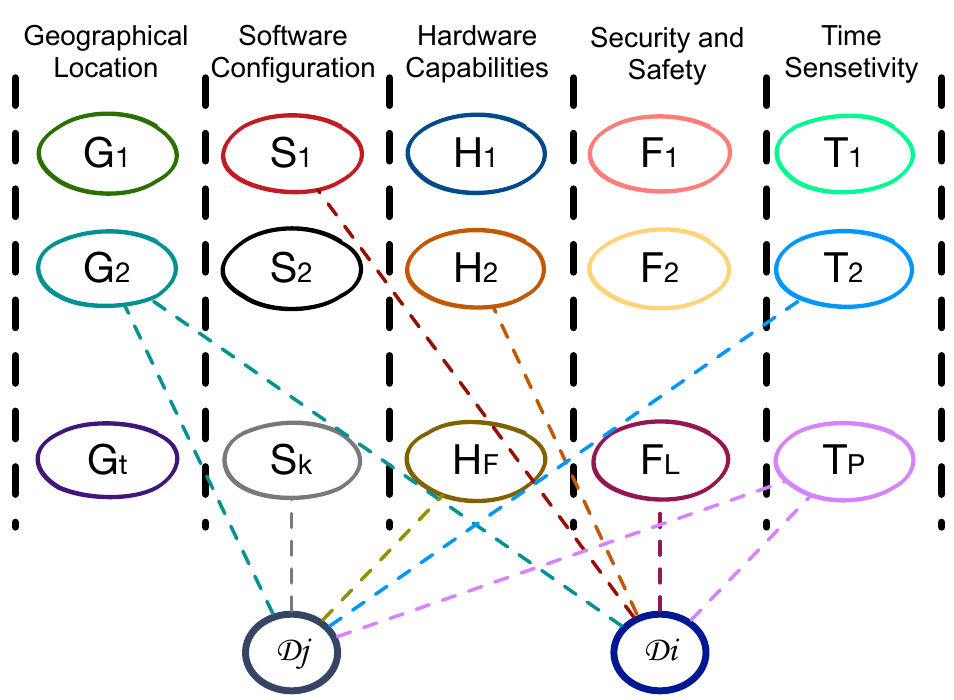}
	\caption{An example of classifying two devices into clusters}
	\label{fig:cluster}
\end{figure}
\begin{figure*}[t]
	\centering
	\includegraphics[width=0.9\textwidth, height=0.9\columnwidth]{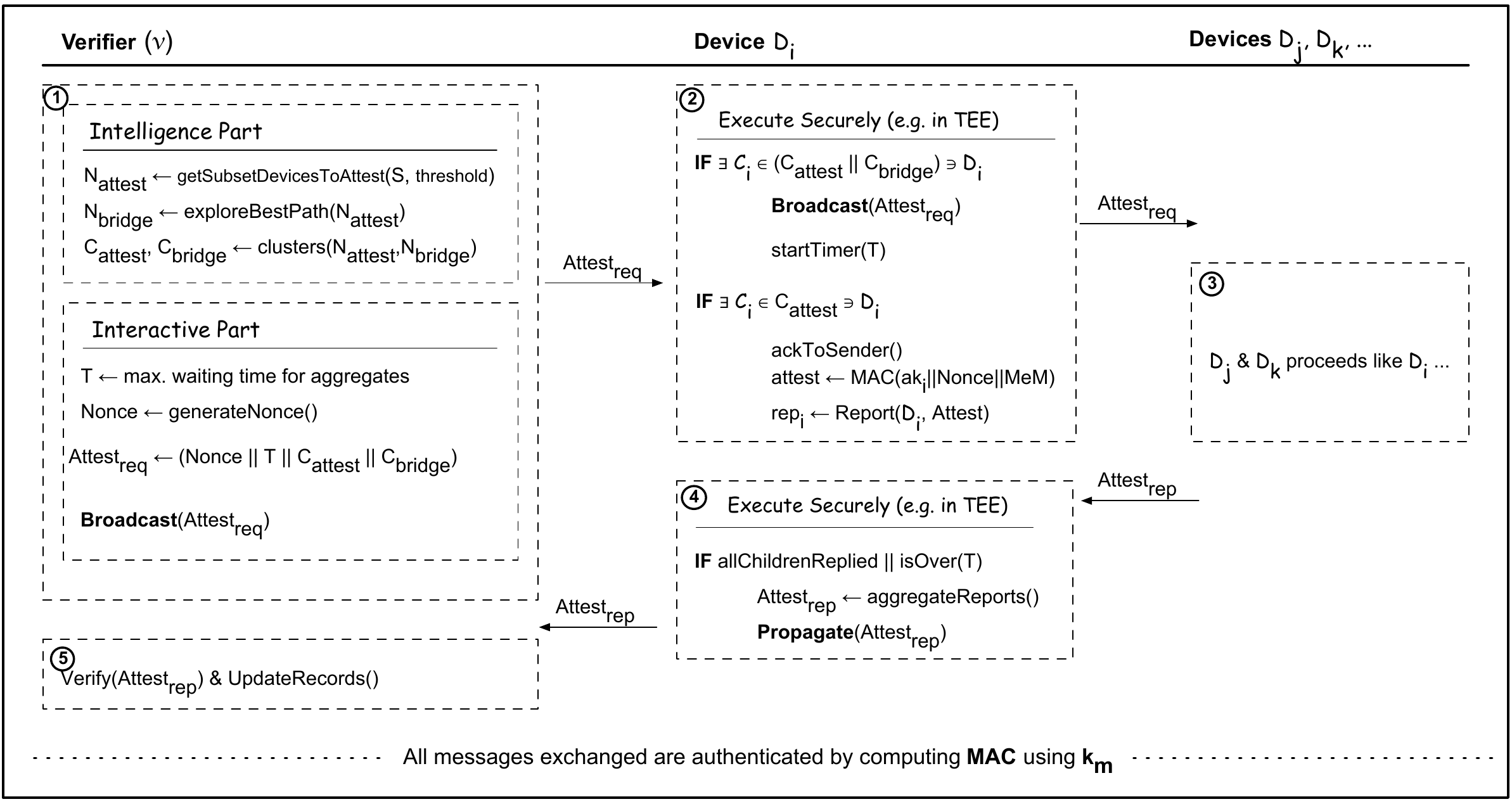}
	\caption{An overview of the attestation phase in WISE}
	\label{fig:attest}
	\vspace{-15pt}
\end{figure*}
\subsection{Initialization Phase}
\label{subsec:init}
$\upsilon$ initializes each $\rho$ with two device-specific keys for mutual authentication ($dk_i$) and attestation ($ak_i$) purposes, and one master key that is shared among all devices at the swarm-level ($k_m$). Also, each $\rho$ stores its own unique ID ($D_i$), and the message authentication code (MAC) of the correct software state ($M_s$) of its memory (MeM), computed using $ak_i$.

Based on the recent taxonomies of addressing security challenges and IoT attacks \cite{tax1,tax2}, we identify five main categories to classify IoT devices, where each category has different clusters and every cluster has a unique ID ($c_i$):
\begin{itemize}
	\item {Geographical Location:} $\rho$'s are divided into clusters based on their geographical locations.
	\item {Software Configurations:} $\rho$'s are grouped into clusters according to partial or full similarities in their software.
	 \item{Hardware Capabilities:} $\rho$'s are grouped into clusters according to the type of hardware they possess.
	\item{Security \& Safety:} its an application-dependent feature which classifies $\rho$'s according to how critical they are to the application domain considering a combination of the task they perform and the security mechanism employed by the hosting hardware.
	\item{Time Sensitivity:} $\rho$'s are classified into clusters according to their time constraints and how critical is it  to deviate them frequently from doing their tasks. 
\end{itemize}

Number of clusters in each category is an application-dependent value. Each $\rho$ should \textit{virtually} belong to one cluster of each category by storing its identifier ($c_i$) in its memory as shown in Figure \ref{fig:cluster}. 
The goal of the multi-clustering technique is twofold; First, it  reduces the communication overhead in terms of number and sizes of exchanged packets, as explained in \cref{sec:wisef}.  Second, over time, $\upsilon$ can learn from the attestation history the behavior of $adv$ by realizing the clusters that have frequently compromised devices, thus being able to patch common vulnerabilities. 
Please note that the selection criteria of categories and clusters is flexible as more or less categories can be considered depending on the application domain.
\subsection{Attestation Phase}
\label{subsec:attest}
Initially, $\upsilon$ initiates the attestation routine for all devices as there is still no history and thus all $\rho$'s in the swarm are attested. Depending on  authenticated broadcast messages, $\upsilon$ propagates an attestation request ($Attest_{req}$), containing a $nonce$ to avoid replay attacks, to all $\rho$'s in the vicinity, which is then further propagated in the network via broadcasting too, thus forming a virtual spanning tree.
$Attest_{req}$ contains also the maximum time ($T$), through which,  each $\rho$ has to send a valid reply, where $T$  is adjusted w.r.t the slowest  $\rho$ included in that attestation. The calculation of various time values takes place once all classes of devices are in a healthy state at deployment time to avoid delay-related attacks (e.g. downgrade or black hole attacks) \cite{black}. Therefore,  every receiver $\rho$ acknowledges  receiving $Attest_{req}$ to the sender device, thus determining the number of replies that should be received. Each $\rho$ answers $Attest_{req}$ by producing an attestation report ($rep_i$). $rep_i$ consists of the device identifier ($D_i$) in addition to an attest value reflecting its state. 
All attestation reports are propagated back after being aggregated at every intermediate $\rho$ using a secure aggregation scheme \cite{aggregate08} till reaching $\upsilon$. 

After running the attestation routine few times \footnote{Number of times is an application-dependent value that determines the level of confidence. In our case, we consider the worst scenario when the attacker behaves randomly. Therefore, a high number of times, e.g. thousands of times, does not really enhance the accuracy. In the experiment section later on, we adjust this number to be only 100 times.}, $\upsilon$ starts behaving smartly by distributing the attestation over a dynamically variable time window, where each time the most likely compromised devices are only attested (see \textcircled{1} in Figure \ref{fig:attest}). In the selection process of the candidates, $\upsilon$ takes into account the following factors: (i) the attestation history of each $\rho$ and its direct neighbours, (ii) the ratio of compromised devices in every cluster that $\rho$ belongs to, (iii) the maximum time that $\rho$ could stay without being attested (this is specified in the initialization phase), (iv) the degree of importance of $\rho$'s task to be attested more often (this can be inferred from Security \& Safety category), and (v) the time constraints of $\rho$ w.r.t real time tasks that should not be  interrupted (or rarely so). In principle, other criteria can be considered in the selection process of devices and the frequency of performing attestation. The minimum number of candidates (i.e. \textit{threshold}) is an application-dependent value.

In order to keep the size of $Attest_{req}$ small, the attestation target will be at the cluster-level. 
Therefore, upon learning, $\upsilon$ finds out the near-optimal set of clusters ($C_{attest}$) (in terms of number of devices and number of hops) that cover the output candidates in order to minimize the communication overhead (given the network topology). Furthermore, if some of the selected devices are not directly reachable by $\upsilon$ or by other selected devices, $\upsilon$ finds out the minimal set of clusters ($C_{bridge}$) that would participate in the attestation routine as bridges. Their goal is only broadcasting the received $Attest_{req}$ and then propagating the delivered attestation reports. Unless there are no other alternatives to reach some or all devices in  $C_{attest}$, $\upsilon$ avoids including the clusters that contain time-sensitive devices in $C_{bridge}$.


Once both sets of clusters are generated, $\upsilon$ creates $Attest_{req}$ containing them along with the $nonce$ value and $T$. Any receiver $\rho$ re-broadcasts $Attest_{req}$ if it belongs to one of the clusters included in either sets, and computes its attestation report if it is only a member in one of the clusters in $C_{attest}$ (see \textcircled{2}). Finally, each $\rho$ propagates the aggregate of attestation reports if either all children are replied or the maximum waiting time is over (see \textcircled{4}). $\upsilon$ verifies the delivered aggregates and accordingly updates the history records of each $\rho$, and then measures are taken against compromised devices, i.e. \textit{secure erasure} \cite{speed}.

\section{WISE: Analytical Model}
\label{sec:wisef}
In this section, we describe the formal mathematical model of the intelligence part of WISE, whereas in the next section (\cref{sec:sec}), we analyze the security of our protocol.
\subsection{Notations and Definitions}
\label{nd}
Let $G(S,L)$ be a graph of an IoT mesh network where $S$ (the swarm) is a finite set of devices (nodes) represented as $S = \{D_1, D_2, \dots, D_n\}$, and $L$ is the set of direct links between nodes. We define $k$ uncorrelated categories as $Cats=\{C_{1},C_{2},\dots,C_{k}\}$ where $C_{\bullet}$ represents the category name (i.e. geographical). Each $C_{\bullet}$ contains a set of clusters, defined as $C_{\bullet}=\{c_{\bullet1},c_{\bullet2}, \dots, c_{\bullet m}\}$ where $m$ is the number of clusters. Each $D_i$ $\in$ $S$ should belong to only one $c_{\bullet i}$ in every $C_{\bullet}$. In nutshell, the following properties must be satisfied:
\begin{enumerate}
	\item\textbf{ Non-overlapping:} For each $C_{\bullet}$, $\bigcap \limits_{1 \leq i \leq |C_{\bullet}|} c_{\bullet i} = \emptyset$. 
	\item\textbf{ Network Coverage:} For each $C_{\bullet}$, $\bigcup \limits_{1 \leq i \leq |C_{\bullet}|} c_{\bullet i} = S$. 
\item \textbf{Uniformity:}  $\forall D_{\bullet}\in S$, $\sum_{i=1}^{k}  | \{D_{\bullet}\} \cap  \bigcup \limits_{1 \leq j \leq |C_{i}|} c_{ij}  |= k$.

\end{enumerate} 

Given the network topology, each $D_i \in S$ is represented as a $<key, value>$ map, where $key$ is the unique $D_i$ and $value$ is an 8-tuple of metadata represented as follows: $\langle D_{clusters}, T_{max}, D_{history}, D_{attest,t},T_{last}, D_{neighbors},$ 
$D_{paths}, D_{degree}\rangle$. Each attribute is defined as follows:

\begin{itemize}
	\item $D_{clusters}:$ the $k$ clusters where $D_i$ is a member in.
	\item $T_{max}\in \mathbb{Z}^+:$ the maximum time (in minutes) that $D_i$ can stay without attestation. 
	
	\item $D_{history}:$ the attestation history of $D_i$, represented as an ordered finite set $\{a_{0},a_{1},\dots,a_{t-1}\}$, where $a_{\bullet} \in \{0,1,x\}$ as 0 means that $D_i$ was compromised, 1 refers to a healthy state, and $x$ means that $D_i$ has not been attested in the considered iteration.  
	
	\item $D_{attest,t}  \in \{0,1,x\}:$ the predicted attestation response in the current iteration ($t$).
	 
	\item $T_{last}\in \mathbb{Z}^+:$ the last time (in minutes) that $D_i$ has been attested. 
	
	\item $D_{neighbors} \subset S:$ the set of devices that are directly connected with $D_{i}$, represented as $D_{neighbors}=\{D_j|l_{ij}\in L,D_{j}\in S\}$ where $l_{ij}$ is an undirected link.
	
	\item $D_{paths}:$ an ordered set of the top $m$ shortest paths that $\upsilon$ can follow to reach $D_i$. Each path is represented as a set of devices from source ($\upsilon$) to destination ($D_i$).
	
	\item $D_{degree}\in [1,|C_{k}|]$: it represents the degree of sensitivity w.r.t  real time tasks where 1 means extremely sensitive. 
\end{itemize}
\vspace{-3pt}
\subsection{Problem Formalization}
Given the $<key, value>$ map of devices, the intelligence part of WISE (see \textcircled{1} in Figure \ref{fig:attest}) can be represented as multi-objective optimization problem.

\subsubsection{Candidates Selection}
\label{sss:cs}
$\upsilon$ leverages the attestation history of each $D_i$ as a heuristic information for computing the probability of being uncompromised in the next attestation iteration. Considering the calculated probabilities of devices along with their time constraints, the candidates selection problem is formalized as follows: \vspace{-5pt}\begin{equation}
\label{eq1}
\resizebox{1\hsize}{!}{$
\begin{aligned}
& \underset{\alpha_{1},\dots, \alpha_{|S|} }{\text{minimize}}
\ \sum_{j=1}^{|S|} \alpha_{j}*Pr(D_{j}.D_{attest,t}=1|O_t) \ \ \text{s.t. }\ \sum_{i=1}^{|S|} \frac{\alpha_{i}}{|S|} \geq Th_{cov}, \\
& Min\{(1-\alpha_{i})*(D_{i}.T_{last}+D_{i}.T_{max}-T_{current}):
1\leq i \leq |S| \}>0 
\end{aligned} $}
\end{equation}

where  $|S|$ is the number of devices in the swarm, $\alpha \in \{0,1\}$ is a binary free variable at which the optimization step is performed for either excluding (0) or including (1) $D_j$ in the attestation, $Pr(D_{j}.D_{attest,t}=1|O_t) $ represents the probability of the $j^{th}$ device being uncompromised in the current attestation iteration (once the interactive part takes place) given $O_t$ as a set of features (observations), $T_{current}$ is
the time (in minutes) of running the interactive part of the current iteration. $Th_{cov}$ is a constraint in the objective function which represents the minimum ratio of devices to be covered. The second constraint guarantees attesting all devices within the predefined time window. The output is a set of the most likely compromised devices that is formally defined as
 $N_{attest}=\{D_{i}: D_{i}\in S, \alpha_{i}=1\}$.
 
 \subsubsection{Best Route Exploration (gap bridging)}
 WISE depends on an authenticated broadcast technique to emit $Attest_{req}$. Getting $Attest_{req}$ delivered to all candidates selected in \cref{sss:cs} should incur a minimal communication and runtime overhead. Therefore, a minimum number of devices should be involved in delivering $Attest_{req}$ to the candidates 
 and then propagating their aggregated reports to $\upsilon$. Unless there are no other alternatives, time-sensitive devices should not be involved and deviated from doing their main tasks. The gap bridging problem is formalized as follows:
 \vspace{.7mm}
 \begin{equation}
 \resizebox{.9\hsize}{!}{$
 	\centering
 \begin{aligned}
 & \underset{\beta_{1},  \dots  \beta_{|P_{comb}|} }{\text{minimize}}
 \ \sum_{ \langle P_1,\dots,P_{|N_{attest}|}\rangle_{j}\in P_{comb} } \beta_{j}* \sum_{D_{\bullet}\in U_{D_i}}  D_{\bullet}.D_{degree} \\ 
 &\text{s.t. }\ \ \sum_{i=1}^{|P_{comb}|} \beta_{i} =1
 \end{aligned}$}
 \end{equation} where $P_{comb}=\prod_{a=1}^{|N_{attest}|}D_{a}.D_{paths}$ is the Cartesian product of the set of paths of each candidate device, $|P_{comb}|$ is the cardinality (length) of path combinations set, $\langle P_1,\dots,P_{|N_{attest}|}\rangle_{j}$ is the $j^{th}$ tuple element in $P_{comb}$ which corresponds to best (shortest) paths combination among $|P_{comb}|$ different possibilities, $\beta_{j}\in \{0,1\}$ is a binary free optimization variable for the $j^{th}$ paths combination, and   $U_{D_i}=\bigcup_{i=1}^{i=|N_{attest}|} P_{i}$ represents a set of nodes resulted by unifying a combination of shortest paths associated with the $j^{th}$ tuple $\langle P_1,\dots,P_{|N_{attest}|}\rangle_{j}$. The imposed constraint is to ensure the selection of a single combination of paths. The output is a set of nodes, defined as $N_{bridge}=\{D_{\bullet}: \langle P_1,\dots,P_{|N_{attest}|}\rangle_{j}\in P_{comb}, D_{\bullet}\in \bigcup_{i=1}^{i=|N_{attest}|} P_{i},  \beta_{j}=1 \}$. 
      
 \subsubsection{Clusters Selection}
 adding the IDs of candidate devices (along with bridge devices) to $Attest_{req}$ hinders the scalability as the message size will be big. This also sharply increases the communication overhead as well as the processing time at $\rho$ side.  We overcome this limitation by targeting the cluster-level. In such case, $Attest_{req}$ will include the identifiers of the minimal set of clusters that cover the selected devices w.r.t. the communication overhead. This is formalized as follows:
 \vspace{-10pt}
 \begin{equation}\begin{aligned}
 & \underset{\gamma_{11}, \gamma_{12}, \dots  \gamma_{|C_{k}|m} }{\text{minimize}}
 \ \sum_{j=1}^{k} \sum_{i=1}^{|C_{j}|} \gamma_{ji}*ComOverhead(c_{ji}) \ \ \text{s.t. }\ \\
 & \forall D_{\bullet}\in N_{attest} \cup N_{bridge}\  \sum_{j=1}^{k} \sum_{i=1}^{|C_{j}|}  \gamma_{ji}*|c_{ji}\cap \{D_{\bullet}\}|\geq1
 \end{aligned}
 \end{equation}
 
 where $k$ is the number of categories, $|C_{j}|$ is the number of clusters in the $j^{th}$ category, $c_{ji}$ represents the $i^{th}$ cluster in the $j^{th}$ category, $ComOverhead(\bullet)$ is a function that computes the communication overhead in terms of number of hops, $\gamma_{ji}\in \{0,1\}$ is a binary free variable at which the optimization process is performed so that $1$ value means that the cluster $c_{ji}$ is selected. The constraint imposes selecting the most relevant clusters to the unified set of $N_{attest} \cup N_{bridge}$. The output is two sets of clusters represented as
 $C_{attest}=\{c_{ji}: c_{ji} \in C_{j}, j\in\{1,\dots,k\}, i\in\{1,\dots,|C_{j}|\}, \gamma_{ji}=1, c_{ji} \cap N_{attest}\neq\emptyset \}$, and
  $C_{bridge}=\{c_{ji}: c_{ji} \in C_{j}, j\in\{1,\dots,k\}, i\in\{1,\dots,|C_{j}|\}, \gamma_{ji}=1, c_{ji} \cap N_{attest}=\emptyset \}$  respectively.
\newline  
\newline
  
  \subsection{Probabilistic Model \protect\footnote {We identify three heuristic approaches as solvers to provide near-optimal solutions for each of the formalized optimization problems. Due to the limited space, we omit the description of these approaches. To get the pseudo code of these solvers (or even the complete source code of the intelligence part), please contact the first author.}}
  
To compute the probability component, $Pr(D_{\bullet}.D_{attest,t}|O_t)$,  in the candidates selection problem, we use HMM for two main reasons. First, it computes the probability distribution of discrete hidden (unknown) states (i.e. compromised or uncompromised) over sequences (time-series) of observations (i.e. number of compromised neighbours). Accordingly, it is suitable to deal with the worst case scenario when the attacker works randomly. Seconds, in spite of assuming a powerful verifier, it is compact and very fast to use, compared to other techniques (i.e. Recurrent Neural Networks) which are very time-consuming (iterative-based) and require a very big memory to store the $<key, value>$ map when considering large mesh networks.

\textbf{HMM notations}. At any iteration $t$, two hidden states are identified: compromised ($D_{attest,t} = 0$) and uncompromised ($D_{attest,t} = 1$). Considering the $<key, value>$ map, a vector of observations $\textbf{O}_t$ is formed for each $D_{\bullet}$ by selecting $M$ random independent variables (features). Given a sequence of observations $\{\textbf{O}_{1},\dots,\textbf{O}_{t}\}$, learning $D_{attest,t}$ requires computing the following HMM parameters $\theta=\{\pi,H,b_{0,f_1}, \dots b_{0,f_M},b_{1,f_1},$ $\dots b_{1,f_M} \}$:
\begin{itemize}
	\item {The state transition matrix ($H$):} it contains the probabilities of  transitioning from a state to the other. For instance, $h_{0,1}$ is the probability of $D_{\bullet}$ to be healthy given that $D_{\bullet}$'s previous state  is compromised (i.e. $Pr(D_{attest,t}=1|D_{attest,t-1}=0)$).
	\item {The initial state probability distribution ($\pi$):} it initializes the probability distribution over $|Q|$ values, where $Q$ is the set of hidden states and $|Q|$ is its cardinality.
	\item{The emission (observation) probability distributions (\{$b_{1,f_1}, \dots b_{1,f_M},b_{0,f_1}, \dots b_{0,f_M}  $\}): } it requires $|Q|\times M$ distributions for inferring values. For instance, $b_{0,f_1}$ is the probability distribution of observing the value of feature $f_1$ (i.e. number of compromised devices in a joint cluster)  when $D_{\bullet}$ is predicted to be compromised.
\end{itemize}

\textbf{Probability Computation.} By leveraging the $D_{\bullet}$'s attestation history set ($D_{history}$), we estimate the HMM parameters, $\theta$, using the Maximum Likelihood Estimation (MLE) method  for a given set of observations. More precisely, the MLE  finds the values that maximize the likelihood of generating the given observations. This is formalized as:\begin{equation}\label{eqa4}\begin{aligned}
& \theta_{mle}= \underset{\theta \in \Theta }{\text{arg max}}\space Pr(L|\theta) \ \ \ 
\text{s.t.} \ \ \ \sum_{i\in Q}\pi_{i}=1, \\
&   \prod_{j\in Q} \sum_{i\in Q}h_{i,j}=1,\ \  \prod_{i\in Q}\prod_{v\in \{1,\dots,M\}}  \int_{\mathbb{R}} b_{i,v}(x) dx =1 
\end{aligned}
\end{equation}
 
where $\Theta$ is the search space of HMM parameters,  $L=\{(x_{1},y_{1}),..., (x_{m},y_{m})\}$ is a set of $m$ distinct groups of paired sets of observations and $D_{\bullet}$'s attestation responses, $(x_{\bullet},y_{\bullet})=(\{\textbf{O}_{t_1},..., \textbf{O}_{t_2}\},\{a_{t_1},...,a_{t_2}\})$, $1\leq t_1<t_2$, $x_{\bullet}$ is a sequence of observation (feature) vectors, and $y_{\bullet}$ is a sequence of true hidden states from iteration $t_1$ to $t_2$.  The first two constraints should equal 1 according to the HMM. The last constraint is associated to $|Q|\times M$ emission probability distributions where the area under each distribution should equal 1 and accordingly the multiplication of the $|Q|\times M$ distributions area  equals 1.

The probability of generating any sequence of observations can be written using the HMM parameters $\theta$ as follows:\vspace{-5pt}   \begin{multline}
Pr((x_{\bullet},y_{\bullet})|\theta )= \prod_{i\in Q } \pi_{i}^{f(i,y_\bullet)} \ \ *\ \ \prod_{i\in Q } \prod_{j\in Q }  h_{i,j}^{f(i,j,y_\bullet)}\\ * \prod_{i\in Q}\prod_{\textbf{O}_t \in x_{\bullet} }   \prod_{o_{t,v}\in\textbf{O}_t  }    b_{i,v}(o_{t,v})^{f(i,v,o_{t,v},x_{\bullet},y_{\bullet})}
\end{multline} 
where $f(i,y_{\bullet})$ is an integer $\in [0,1]$  representing the occurrence of the state $i$ given a sequence $y_{\bullet}$, $f(i,j,y_{\bullet})$ is the number of occurrences of having a transition from the state $j$ to $i$ in the given true states sequence $y_{\bullet}$, and $f(i,v,o_{t,v},x_{\bullet},y_{\bullet})$ is the number of times of observing the value $o_{t,v}$ of feature $v$ in the given sequence of observations $x_{\bullet}$ and true hidden states $y_{\bullet}$ when the HMM process is in the state $i$.  

 The values of $\theta$ parameters can be obtained by assuming that for each sequence of the training $m$ groups, the probability of generating  $L$  equals to the probability multiplication of generating each  sequence individually, formalized as:\vspace{-10pt}\begin{multline}
Pr(L|\theta)=\ \ \ \prod_{l=1}^m Pr((x_{l},y_{l})|\theta)=\ \ \ \prod_{l=1}^m \Bigg( \prod_{i\in Q } \pi_{i}^{f(i,y_\bullet)}  \\ * \prod_{i\in Q } \prod_{j\in Q }  h_{i,j}^{f(i,j,y_\bullet)}  \prod_{i\in Q}\prod_{\textbf{O}_t \in x_{\bullet} }   \prod_{o_{t,v}\in\textbf{O}_t  }    b_{i,v}(o_{t,v})^{f(i,v,o_{t,v},x_{\bullet},y_{\bullet})}\Bigg )
\end{multline}
$Pr(L|\theta)$ is a concave function having only one global maximum value. Hence, we compute the partial derivatives of $Pr(L|\theta)$ w.r.t each free parameter in $\theta$. Thus, the values of HMM parameters  that make each of the partial derivatives equals to zero ($\frac{\partial Pr(L|\theta)}{\partial a_{1,1}}=0$, $\frac{\partial Pr(L|\theta)}{\partial \pi_{1}}=0$, $\frac{\partial Pr(L|\theta)}{\partial b_{0,f_1}}=0$, \dots) are computed as follows:\begin{align}
&\pi_{i}=\frac{\sum_{l=1}^m f(i,y_l)}{\sum_{k\in Q}\sum_{l=1}^m f(k,y_l)} \hspace{15pt} h_{i,j}=\frac{\sum_{l=1}^m f(i,j,y_l)}{\sum_{k\in Q}\sum_{l=1}^m f(i,k,y_l)} \hspace{15pt}\\
& b_{i,v}(o)=\frac{\sum_{l=1}^m f(i,v,o,x_{l},y_l)}{{\sum_{l=1}^m \int_{\mathbb{R}} f(i,v,o',x_{l},y_l) do'  }}
\end{align}

The resulted values are normalized and accordingly they implicitly satisfy the constraints in \cref{eqa4}.

\textbf{Inference.} Given the resulted HMM parameters, for any $D_{\bullet}$, the probability of being uncompromised at iteration $t$ is computed as:\begin{align}
\label{eq6}
Pr(D_{\bullet}.D_{attest,t}=1| \textbf{O}_{t})= \sum_{j\in Q}\pi_{j}\ h_{1,j}\prod_{ o_{v} \in  \textbf{O}_{t} } b_{1,v}(o_{v})  
\end{align}


\section{WISE: Security Analysis}
\label{sec:sec}

In principle, the time constraint in \cref{eq1}, that is demonstrated by either a variable time window for each device or a fixed one for all, makes WISE (in spite of being operating in a smart probabilistic way that maintains a minimum communication overhead) at the same deterministic security level  as other existing swarm attestation schemes. For instance, other schemes verify the integrity of the entire swarm each particular period of time (e.g. every 6 hours). During this period (the time between two successive attestation routines), the status of devices is unknown as some of them might be compromised and accordingly they will remain compromised till the next attestation routine is launched. In contrast, WISE distributes the attestation over this long time period (time window), thus minimizing the communication overhead, and speeding up the slow detection of some compromised devices that may result because of the long period of inactivity.
Furthermore, WISE has two other main  advantages. First, it addresses the individual differences and characteristic of devices (e.g. some devices has to be more frequently attested than others and vice versa as shown in Figure \ref{fig:time} where both fixed and variable time window techniques are illustrated). Second, it is robust against a roving malware \footnote{Roving malware is an advanced kind of malware that is always aware of the attestation schedule of the IoT device and thus it is only active between any two successive  attestation routines which deletes itself at the beginning of the attestation to evade detection.}, whereas all other existing schemes are not. In WISE, such kind of malware can not be aware of the attestation schedule because the time window of any device could be variable and represents only the upper bound, by which, the device should be attested at least once, whereas it could be many times as shown for some devices in Figure \ref{fig:fix}, depending on the output of the  intelligence mechanism employed. In nutshell, WISE allows to accommodate different risk levels, rather than adopting a single one as existing schemes, since for big mesh networks, it is unrealistic to assume that all nodes have the same level of exposure to attacks. A higher risk level requires more frequent attestation than a lower one.
\begin{figure}[t!]
	\centering
	\begin{subfigure}[b]{\columnwidth}
		
		\includegraphics[width= \columnwidth]{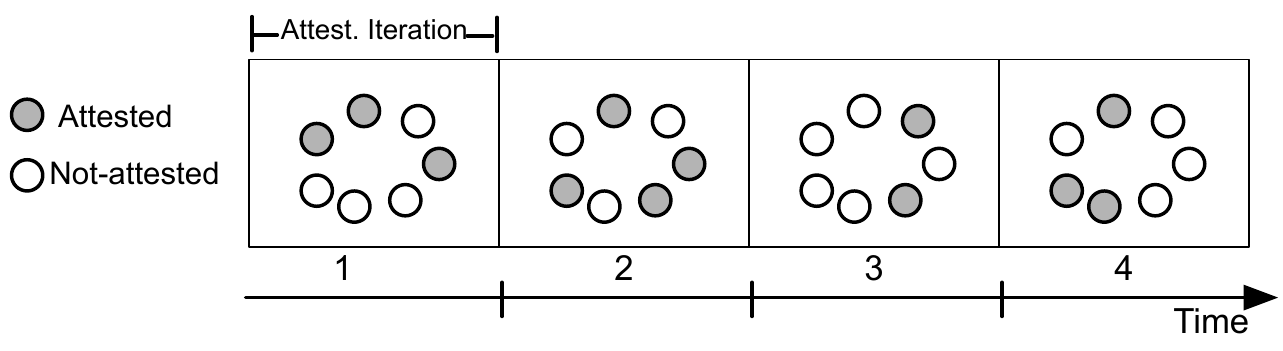}
		\caption{A fixed time window of 4 iterations for a group of devices.}
		\label{fig:fix}
	\end{subfigure}
	\begin{subfigure}[b]{\columnwidth}
		
		\includegraphics[width= \columnwidth]{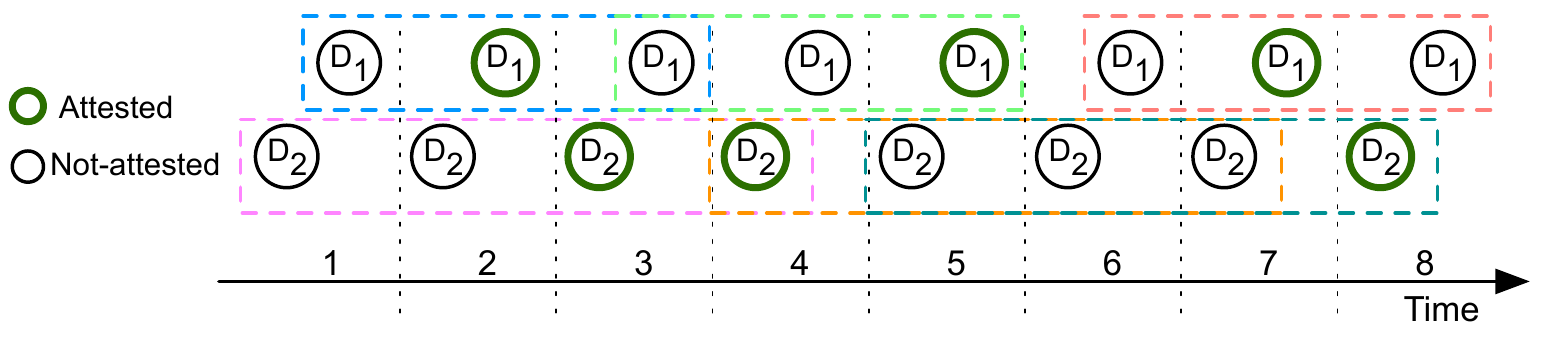}
		\caption{An example of variable time window for two devices, where it is 3 for $D_1$ and 4 for $D_2$.} 
		\label{fig:var}
	\end{subfigure}
	\caption{An overview of the timing technique used in WISE that ensures covering all devices in the attestation within a variable/ fixed time window.}
	\label{fig:time}
	\vspace{-20pt}
\end{figure}

In particular, the communication cost in WISE depends on the tolerated latency of detecting (\textit{possibly}) compromised devices, where in each attestation iteration $A_i$, $m$ out of $n$ devices are attested, and $m$ is an application-dependent value referring to the likelihood of a device to be compromised. WISE avoid starvation cases by identifying a (variable) maximum time window ($W$), through which, each $D_i$ is attested at least once. We elaborate on the security of WISE in two streams: (i) per time window, and (ii) per attestation iteration.

\subsection{Security of WISE per time window.}

This is formalized by the following adversarial experiment $ATT_{adv}^{n, c}(j)$, where $adv$ interacts with $n$ devices, and compromises up to $c$ devices in $S$ according to the  adversarial model presented in \cref{sec:model}, where $0 \leq$ $c$ $\leq$ $n$. Considering that $adv$ is computationally bounded to the capabilities of the devices deployed in $S$, $adv$ interacts with the devices a polynomial  number of times $j$, where $j$ is a security parameter. $\upsilon$ verifies the attestation aggregates received in all $A_i$'s within the time window ($W$) and outputs a decision as 0 or 1. Denoting the decision made  as $A$, $A$ = 1 means that all devices are attested and compromised ones are detected, or $A$ = 0 otherwise. 
\begin{theorem}
	WISE is secure if $Pr$[$A$ = 1 $|$ $ATT_{adv}^{n, c}(j)$ == $A$] is negligible for 0 $<$ $c$ $\leq$ $n$. 
\end{theorem}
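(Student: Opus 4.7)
The plan is to reduce the security claim to two ingredients: (i) the coverage guarantee of the time-window constraint in \cref{eq1}, which deterministically forces every device to be attested at least once inside $W$; and (ii) the unforgeability of the cryptographic primitives instantiated in \cref{subsec:init}, namely the MAC under the per-device attestation key $ak_i$, the authenticated broadcast used to propagate $Attest_{req}$, and the secure aggregation scheme of~\cite{aggregate08}. Under these two ingredients the event ``$\upsilon$ makes the wrong decision'' decomposes cleanly into a coverage failure (probability zero) plus a cryptographic forgery (negligible).

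First, I would make the coverage claim precise: by the second constraint in \cref{eq1}, namely $\min\{(1-\alpha_i)(D_i.T_{last}+D_i.T_{max}-T_{current})\} > 0$, no $D_i$ can miss a window of length $D_i.T_{max}$, so $\bigcup_{A_i \in W} N_{attest}^{(A_i)} = S$ with probability $1$. This turns the per-window statement into a finite disjunction of per-iteration statements, each of which can be handled cryptographically. Second, for every iteration $A_i$ and every compromised $D_i \in N_{attest}^{(A_i)}$, I would argue that the probability that $adv$ produces an aggregate causing $\upsilon$ to misclassify is at most $\varepsilon_{\mathrm{MAC}}(j) + \varepsilon_{\mathrm{bcast}}(j) + \varepsilon_{\mathrm{agg}}(j)$. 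The reduction relies on freshness of the $nonce$ (no replay), MPU-enforced isolation of $ak_i$ against the software-only Dolev--Yao $adv$ of \cref{sec:model}, and the unforgeability property of the aggregation scheme of~\cite{aggregate08}. A union bound over the at-most-$n$ attested devices per iteration and the polynomial bound $j$ on adversarial interactions keeps the aggregated advantage negligible.

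Third, I would handle the roving-malware component that is specific to WISE: the schedule for iteration $t$ is computed by $\upsilon$ from the private records $D_{history}$ and the HMM parameters estimated via \cref{eqa4}, neither of which ever appears on the wire, and each $Attest_{req}$ carries a fresh $nonce$. Hence, from $adv$'s view, $N_{attest}^{(A_i)}$ is computationally hidden until $Attest_{req}$ is actually broadcast, so the malware cannot selectively erase itself ahead of an iteration in which it would be caught; the best it can do is guess the schedule, which succeeds with probability exponentially small in the number of iterations per window.

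The main obstacle I anticipate is that the HMM-driven selection introduces dependence between iterations, since the observation vectors $\textbf{O}_t$ at time $t$ are functions of earlier outcomes, so the per-iteration success events of $adv$ are not independent and a naive union bound across iterations could over-count. I would address this by decoupling the cryptographic and probabilistic parts: first condition on the global ``no-forgery'' event over the whole window (whose complement is negligible by the aggregated bound above), and then argue purely combinatorially about the schedule using the hard deadline $D_i.T_{max}$ in \cref{eq1} to rule out any pathological under-selection by the HMM, regardless of the observation history fed to it.
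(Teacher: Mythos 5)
Your proposal is correct and follows essentially the same route as the paper's (much terser) proof: the coverage guarantee extracted from the second constraint of \cref{eq1} ensures every $D_i$ is attested at least once per window $W$, and everything else is delegated to the assumed security of the MAC and the aggregation scheme. Your version is considerably more rigorous — the explicit reduction with a union bound, the treatment of roving malware via the unpredictability of $N_{attest}$, and the decoupling of the cryptographic and scheduling arguments to handle the HMM-induced dependence are all absent from the paper, which states only the coverage fact and the cryptographic assumptions. The one ingredient the paper invokes that you omit is the atomicity (uninterruptibility) of the on-device attestation routine, which is needed so that malware cannot relocate itself during the integrity measurement even when its device is selected; your MPU-isolation point covers key secrecy but not this measurement-time evasion.
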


\begin{proof}
	Considering \cref{eq1}, $\forall D_{\bullet} \in S$:
	\begin{align}
	\bigcup _{i=1}^{W} \{D_{attest,i}\} \cap \{0,1 \}\neq \emptyset 
	\end{align}
	In other words, each $D_{\bullet}$ in $S$ will be attested $g$ times within $W$ w.r.t $T_{max}$ (see \cref{nd}), where $ 1 \leq g \leq |A_i|$. It is important to mention that this is valid under the assumption that both MAC and aggregation schemes are secure and the attestation routine in the device itself is atomic. 	
\end{proof}

\subsection{Security of WISE per attestation iteration ($A_i$).}
The security of WISE per $A_i$ is a probabilistic value which range from ($1 - \frac{m}{n}$) to $1$, where ($1 - \frac{m}{n}$)  is the worst case when all unselected devices in the current attestation are compromised. Considering the probabilities calculated in \cref{eq6}, we calculate the probability of having a secure swarm (no undetected compromised devices) as: $Pr$[$A_i$ = 1] $\geq$ $Th_{sec}$, where $Th_{sec}$ is a security threshold computed as:  
\begin{align}
Th_{sec} = \frac{m}{n} + \frac{\sum_{D_{\bullet}\notin A_i} Pr(D_{\bullet}.D_{attest,t}=1| \textbf{O}_{t})}{n}
\end{align}

\section{Implementation \& Evaluation}
\label{sec:impeval}

\subsection{Implementation}
\label{subsec:imp}

A prototype of WISE has been implemented over a heterogeneous mesh network consisting of 12 nodes with different hardware and software configurations where each node belongs to Raspberry PI, Arduino Uno, Arduino Due, Arduino Zero, Microship STK 600 with AVR Atmega644P MCU, or MicroPnP IoT platform \cite{mpnp}. Each platform is equipped with IEEE 802.15.4e radio transceiver for wireless communication. SHA-256 is employed as a keyed-hash message authentication code (HMAC) to measure the software integrity during the attestation phase. The precise time consumed by the various cryptographic and interactive operations in each platform is measured in order to simulate large mesh networks.
A Raspberry PI 3 is selected to act as a verifier. In the implementation of HMM, the ratio of compromised neighbours along with the ratio of compromised devices in each corresponding cluster are selected as features (observations) for each device.

\begin{figure*}[t!]
	\centering
	\begin{subfigure}[b]{0.33\textwidth}
		
		\includegraphics[width= \textwidth]{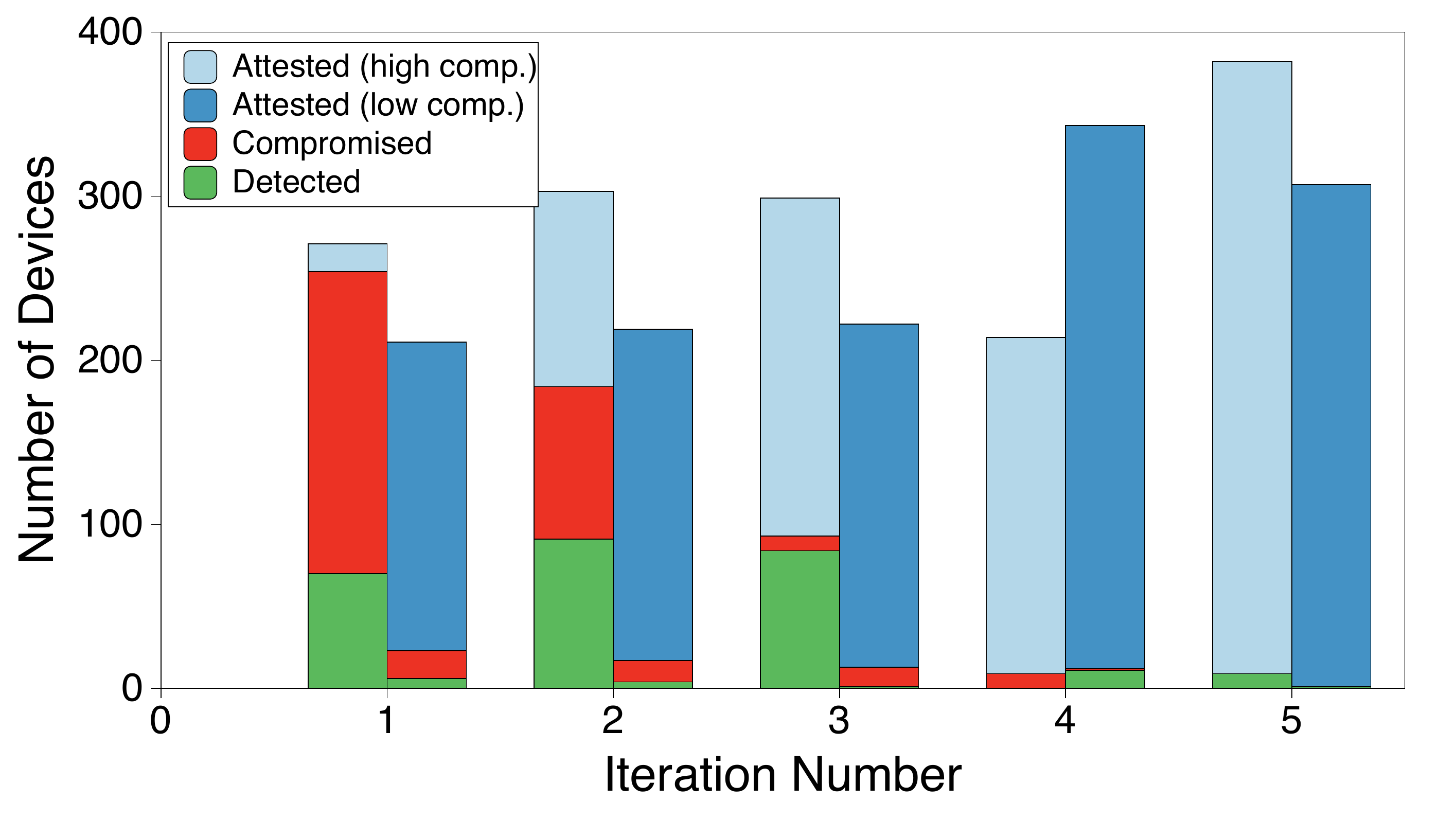}
		\caption{A full scenario of a 1000-node network.}
		\label{fig:sc1000}
	\end{subfigure}
	\begin{subfigure}[b]{0.33\textwidth}
		
		\includegraphics[width= \textwidth]{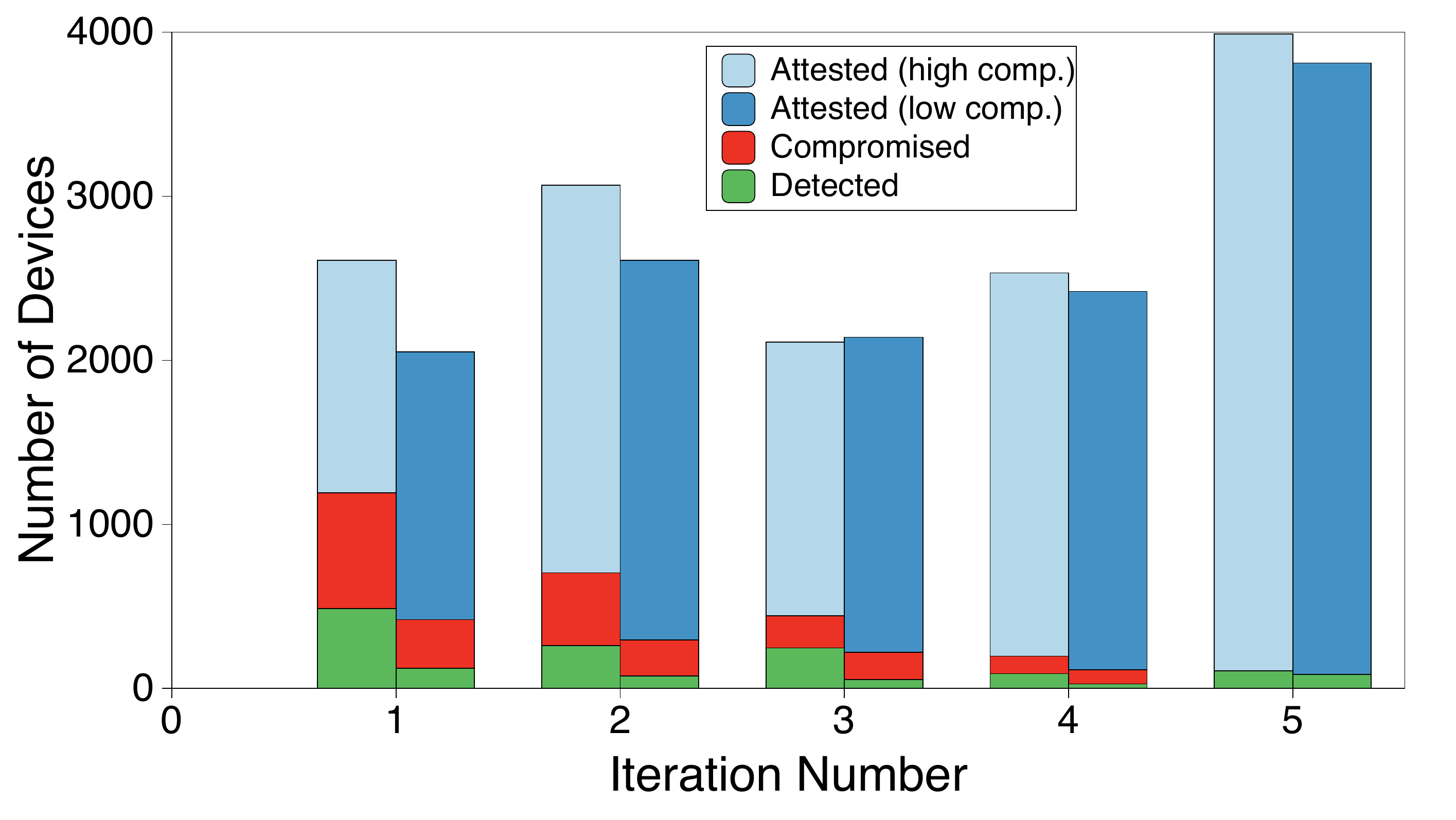}
		\caption{A full scenario of a 10000-node network.} 
		\label{fig:sc10000}
	\end{subfigure}
	\begin{subfigure}[b]{0.32\textwidth}
		
		\includegraphics[width= \textwidth]{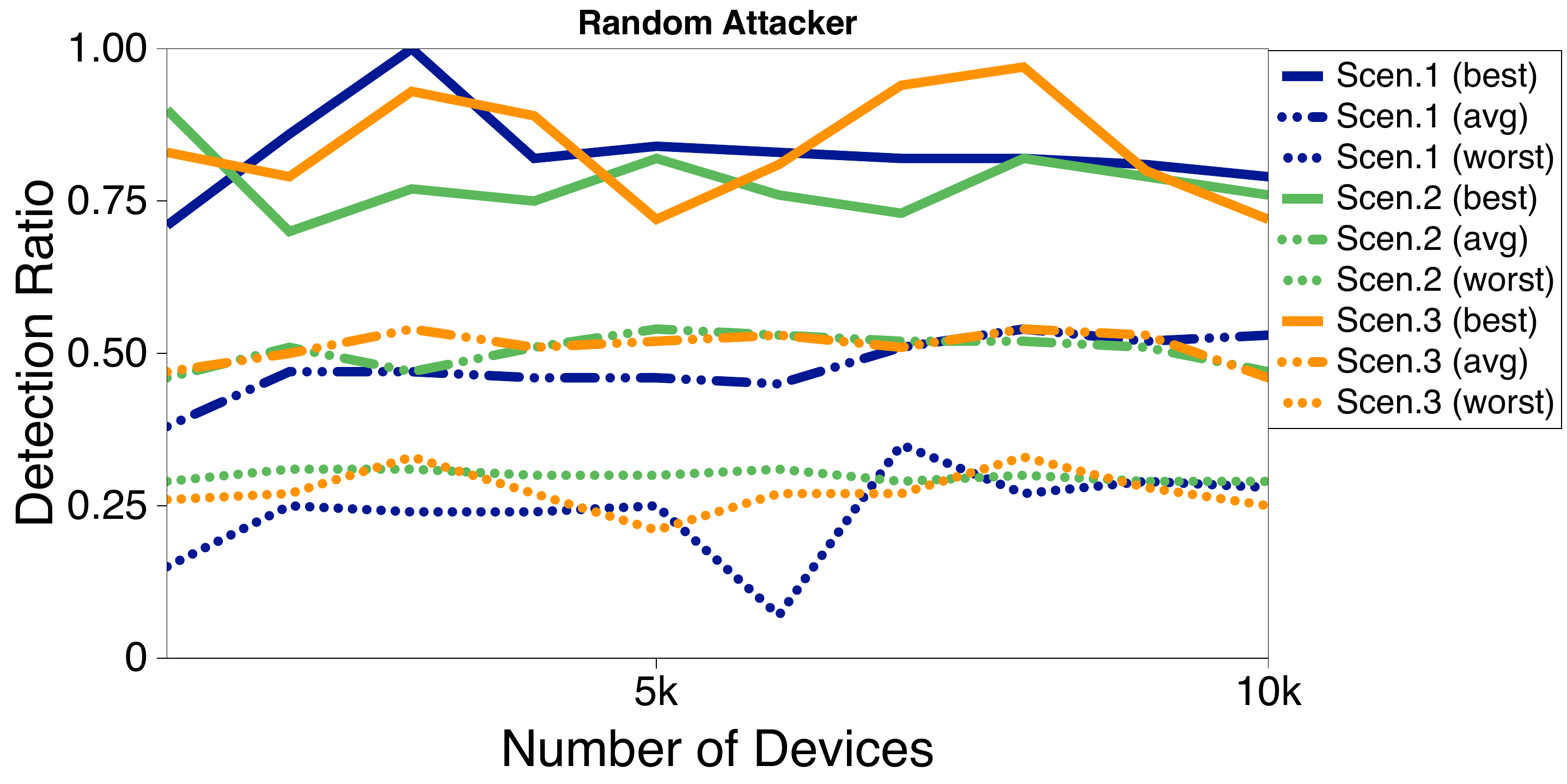}
		\caption{Detection rate with a random attacker.}
		\label{fig:rattack}
	\end{subfigure}
\begin{subfigure}[b]{0.33\textwidth}
	
	\includegraphics[width= \textwidth]{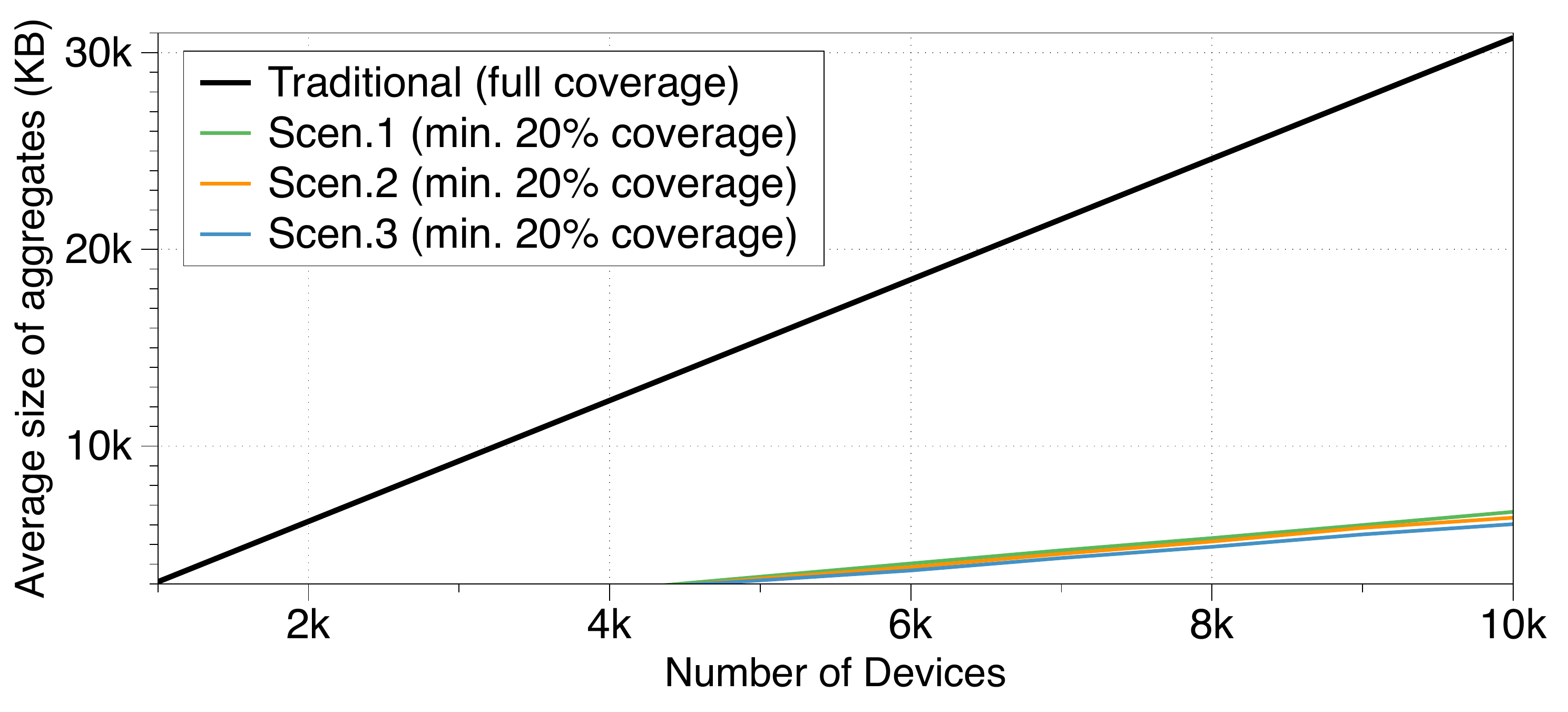}
	\caption{Average size of Aggregated reports.}
	\label{fig:aggregates}
\end{subfigure}
\begin{subfigure}[b]{0.33\textwidth}
	
	\includegraphics[width= \textwidth]{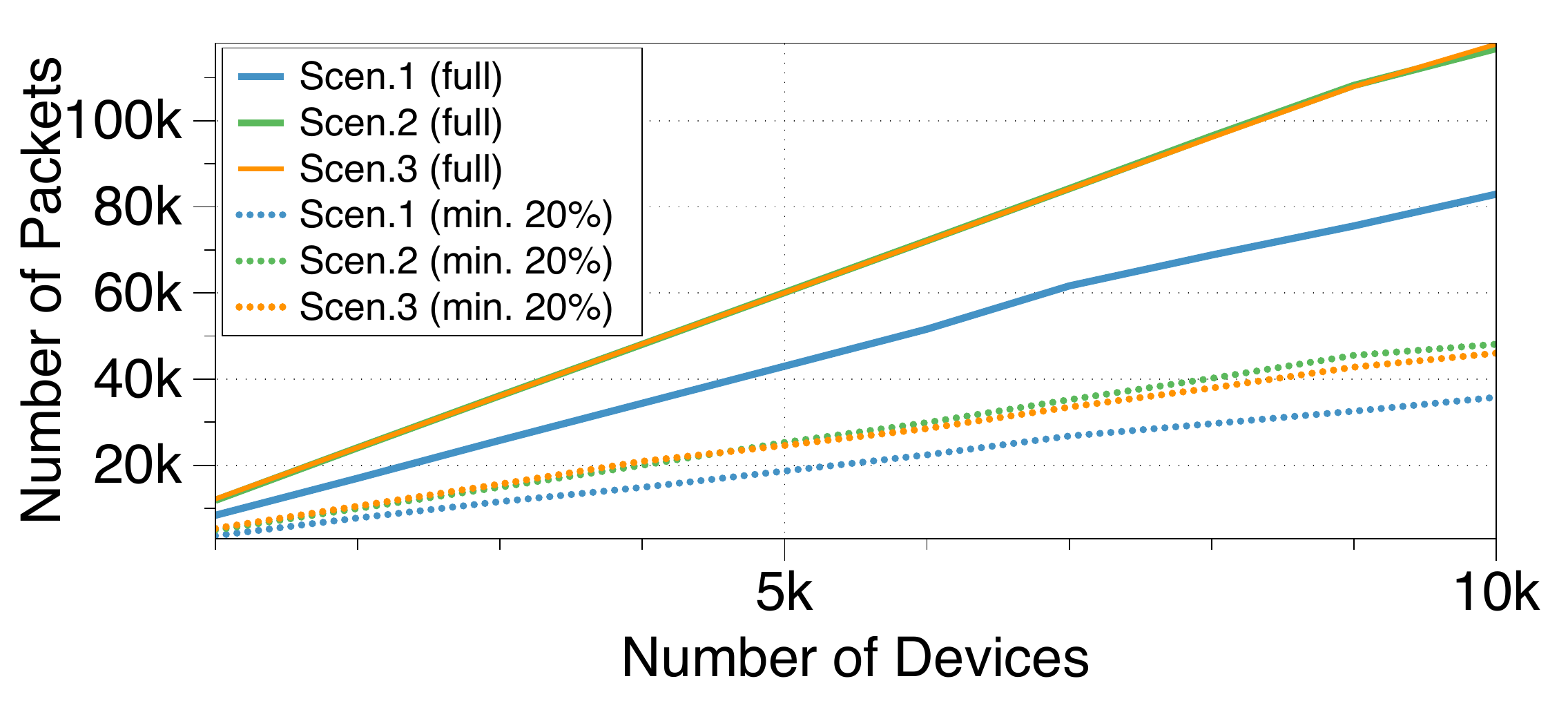}
	\caption{Average number of exchanged Packets.}
	\label{fig:packets}
\end{subfigure}
\begin{subfigure}[b]{0.32\textwidth}
	
	\includegraphics[width= \textwidth]{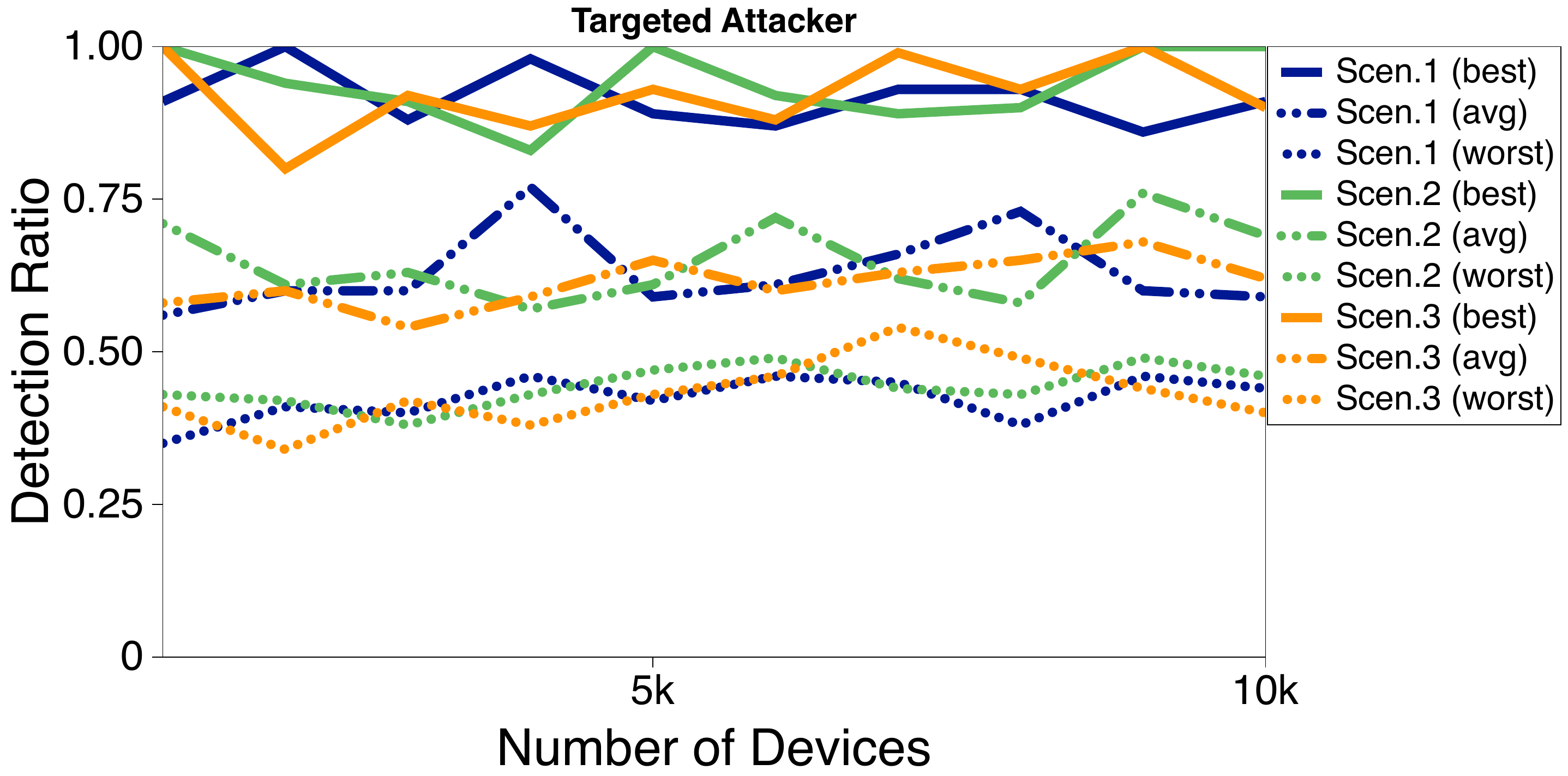}
	\caption{Detection rate with a targeted attacker.}
	\label{fig:tattack}
\end{subfigure}
	\caption{An overview of diverse evaluation factors of WISE}
	\label{figureapp1}
	\vspace{-15pt}
\end{figure*}

\vspace{-3pt}
\subsection{Evaluation}
\label{subsec:eval}

OMNeT++ framework \cite{omnet} is used to simulate large networks with different configurations. The computational and network delays are adjusted according to the experimentally measured values in the prototype implemented. Three different scenarios are simulated, where  each scenario composes of 10 different-sized networks ranging from 1k to 10k devices. The various mesh networks in all scenarios are deployed randomly. In scenario 1, the number of neighbours of each node is set to be a random value from 1 to 5, whereas it is set to be from 1 to 7 in scenario 2 and from 2 to 10 in scenario 3. 

Each contiguous group of devices are selected to be in the same geographical cluster with a fixed size of 30 (mostly), 20, or 10 devices. The devices are deployed randomly over other clusters in other categories. We assigned 6 different clusters in the hardware category and 30 clusters in the software one. The last two categories are composed of 10 different clusters each.

For each network in each scenario, we first run the attestation routine in a uniform state (all devices are attested) 100 times, where each time, we set up an adversarial model to attack randomly a number of devices ranging from 0 to 30\% of the network size. After then, WISE takes place.

In all following figures, we set the minimum coverage of candidate devices per attestation iteration to be 20\% of the network size,  and the number of attacked devices is random in each iteration (0 to 30\% of the network size). 

\textbf{Security and Robustness.}
Figure \ref{fig:sc1000} and \ref{fig:sc10000} represent two examples of two different-sized networks from scenario 1, where in each one,  we show two cases of detecting high and low number of compromised devices. In these figures, the attack took place once before the first iteration.  All compromised devices are detected by the $5^{th}$ iteration (half of the longest time window as we modeled it as a variable discrete event in terms of number of iterations, where for each device, it is set to be a random value from 4 to 10). 

Figure \ref{fig:rattack} shows the worst, average, and best ratio of detected compromised devices (per iteration) after running WISE 30 times in each of the thirty networks in all scenarios. Each time, a different number of devices is randomly compromised. The worst ratio was almost higher than 25\%, whereas the average ratio was around 45\% (with 20\% coverage). Figure \ref{fig:tattack} shows the ratio of detected compromised devices if the attacker was targeting a specific class of devices (in our case, we modeled it to attack Arduino-zero  cluster). Results show that the detection ratio with a targeted attacker is higher than the random one where in many iterations it was 100\% at best, with an average of about 62\%. 

\textbf{Scalability and Efficiency.} Figure \ref{fig:aggregates} and \ref{fig:packets} illustrate the average size of aggregated reports and the number of exchanged messages in WISE respectively (compared to the traditional approach of attesting the entire swarm). Both figures show how efficient and scalable WISE is. We do not show the total runtime of WISE as it is hardware-dependent. Please note that WISE is always faster than other existing swarm attestation schemes as it only targets a subset of devices. 


\textbf{Memory footprint and Computation Time.} At $\rho$ side, WISE requires less than 100 bytes of permanent storage for storing the secret keys and clusters IDs, whereas it requires no more than 5 MB at $\upsilon$ side to store a knowledge base ($<key, value>$ map) for a network consisting of 10k devices regardless the length of the attestation history. Computationally, for a network of 10k devices, the initial step that occurs once at $\upsilon$ side (computing the shortest paths, etc.) requires no more than 6 minutes. Then, the intelligence part of WISE requires less than 9 seconds each time. The verification of the attestation aggregates of 10K devices and updating the records in the knowledge base consumes no more than 3.5 seconds. The size of the attestation request did not exceed (at worst) 304 bytes for a network of 10k devices. 

\vspace{-5pt}\section{Conclusion}
\label{sec:conclusion}

This paper presented WISE, the first smart swarm attestation protocol for the Internet of Things that relies on a resource-efficient machine learning technique using Hidden Markov Model. WISE is scalable and highly efficient to be used in (quasi-) static networks consisting of thousands of heterogenous IoT devices. The minimized communication overhead  depends on the tolerated latency of detecting (unlikely) compromised devices. This makes WISE very suitable to be used in various application domains, especially, in time-sensitive and safety-critical ones such as smart factories.


\section*{Acknowledgment}

This research is supported by the research fund of KU Leuven and IMEC, a research institute founded by the Flemish government.

\bibliographystyle{ieeetr}

\bibliography{ref}

\end{document}